\def\a{\alpha}
\def\b{\beta}
\def\d{\delta}
\def\g{\gamma}
\def\l{\lambda}
\def\e{\epsilon}
\def\s{\sigma}
\numberwithin{equation}{section}
\newtheorem{Theorem}{Theorem}[section]
\newtheorem*{Theorem*}{Theorem}
\newtheorem{Proposition}[Theorem]{Proposition}
 { \theoremstyle{definition}

\newtheorem{Remark}[Theorem]{Remark} }
\begin{document}
\allowdisplaybreaks

\renewcommand{\PaperNumber}{074}

\FirstPageHeading

\ShortArticleName{The Generalized Lipkin--Meshkov--Glick Model and the Modified Algebraic Bethe Ansatz}

\ArticleName{The Generalized Lipkin--Meshkov--Glick Model\\ and the Modified Algebraic Bethe Ansatz}

\Author{Taras SKRYPNYK}

\AuthorNameForHeading{T.~Skrypnyk}

\Address{Bogolyubov Institute for Theoretical Physics, 14-b Metrolohichna Str., Kyiv, 03680, Ukraine}
\Email{\href{mailto:taras.skrypnyk@unimib.it}{taras.skrypnyk@unimib.it}}

\ArticleDates{Received June 19, 2022, in final form September 16, 2022; Published online October 10, 2022}

\Abstract{We show that the Lipkin--Meshkov--Glick $2N$-fermion model is a particular case of one-spin Gaudin-type model in an external magnetic field corresponding to a limiting case of non-skew-symmetric elliptic $r$-matrix and to an external magnetic field directed along one axis. We propose an exactly-solvable generalization of the Lipkin--Meshkov--Glick fermion model based on the Gaudin-type model corresponding to the same $r$-matrix but arbitrary external magnetic field. This model coincides with the quantization of the classical Zhukovsky--Volterra gyrostat. We diagonalize the corresponding quantum Hamiltonian by means of the modified algebraic Bethe ansatz. We explicitly solve the corresponding Bethe-type equations for the case of small fermion number $N=1,2$.}

\Keywords{classical $r$-matrix; Gaudin-type model; algebraic Bethe ansatz}

\Classification{81R12; 82B23; 17B80}

\section{Introduction}
The Lipkin--Meshkov--Glick model was proposed in \cite{LMG} in order to describe shape phase transitions in nuclei. The corresponding $2N$-fermion Hamiltonian is written as follows:
\begin{gather}
\hat{H}_{\rm LMG} =  \frac{\epsilon}{2} \sum_{\sigma= \pm } \sum_{j=1}^N \sigma c^{\dag}_{j,\sigma}c_{j,\sigma} - \frac{W}{2} \sum_{i,j=1}^N\big( c^{\dag}_{i,+} c^{\dag}_{j,-}c_{j,+}c_{i,-} + c^{\dag}_{i,-} c^{\dag}_{j,+} c_{j,-}c_{i,+} \big) \nonumber\\
\hphantom{\hat{H}_{\rm LMG} =}{}-
\frac{V}{2} \sum_{i,j=1}^N \big(c^{\dag}_{i,+}c^{\dag}_{j,+} c_{j,-}c_{i,-} + c^{\dag}_{i,-} c^{\dag}_{j,-} c_{j,+}c_{i,+}\big), \label{hamLMG0}
\end{gather}
where $c_{j,\s'}$, $c^{\dag}_{i,\s}$, $i,j = 1,2,\dots ,N$, $\s,\s'\in \{+,-\}$ are fermion creation-anihilation operators,

The exact solvability of the model was shown in \cite{Pan} using a kind of Bethe ansatz technique. Later a connection of the Bethe ansatz for the Lipkin--Meshkov--Glick model with the Bethe ansatz for the trigonometric Gaudin model \cite{Gaud} was established in \cite{Duk}. The established connection is very indirect. In more details, the Lipkin--Meshkov--Glick Hamiltonian in the bosonic representation was connected \cite{Duk} with the combination of bosonized two-spin trigonometric Gaudin Hamiltonians in an external magnetic field directed along the third axis \cite{Lerma, Ortiz}. In such a way the Bethe ansatz existing for two-spin trigonometric Gaudin model was used to construct the Bethe ansatz for the Lipkin--Meshkov--Glick model.

The described above construction of \cite{Duk} seems to be somewhat artificial. Indeed, it is well-known \cite{Pan} the Hamiltonian~\eqref{hamLMG0} is rewritten in terms of~$\mathfrak{gl}(2)$ pseudo-spin operators $\hat{S}_{ij}$, \mbox{$i,j = 1,2$} as follows:
\begin{equation}
\hat{H} = \frac{\epsilon}{2} \big(\hat{S}_{11}- \hat{S}_{22}\big)-\frac{W}{2}\big(\hat{S}_{12}\hat{S}_{21}+ \hat{S}_{21}\hat{S}_{12}\big)-
\frac{V}{2}\big( \hat{S}_{12}^2 + \hat{S}_{21}^2\big), \label{hamGinS0'}
\end{equation}
where the commutation relations of the pseudo-spin operators $\hat{S}_{ij}$, $i,j = 1,2$, are the following ones:
\begin{equation}\label{pso}
\big[\hat{S}_{ij}, \hat{S}_{kl}\big]= \d_{kj} \hat{S}_{il}- \d_{il} \hat{S}_{kj}.
\end{equation}
This Hamiltonian should be connected with the class of integrable one-spin (not two-spin!) models.

In terms of the standard basis of $\mathfrak{so}(3)\simeq \mathfrak{sl}(2)$ the Hamiltonian~\eqref{hamGinS0'} is rewritten as follows:
\begin{gather}
\hat{H} = {\rm i} \e \hat{S}_{3} + (W+V) \hat{S}_{1}^2+ (W-V) \hat{S}_{2}^2, \qquad {\rm i}=\sqrt{-1}, \label{hamGinS0}
\end{gather}
where $\hat{S}_{\a}$, $\a = 1,2,3$, satisfy the standard commutation relations of $\mathfrak{so}(3)$ pseudo-spin operators:
\begin{gather}
\big[\hat{S}_{\a}, \hat{S}_{\b}\big]= \e_{\a\b\g} \hat{S}_{\g}. \label{sscr}
\end{gather}

We remind that standard Gaudin integrable spin chain Hamiltonians \cite{Gaud}, i.e., the ones based on the classical skew-symmetric $r$-matrices \cite{Skl}, contain spin-spin interaction terms for different spins of the chain and do not contain one spin self-interaction terms. That is why it is impossible to obtain the Hamiltonian~\eqref{hamGinS0} using the standard Gaudin Hamiltonians \cite{Gaud} with one spin. On the other hand, in the series of our previous papers \cite{SkrPLA2005, SkrJGP2006, SkrJPA2007} we have proposed a generalization of the Gaudin Hamiltonians
 based on the non-skew-symmetric classical $r$-matrices $r(u_1,u_2)=
\sum_{\a,\b=1}^{3}r_{\a\b}(u_1, u_2)X_{\a} \otimes
X_{\b}$, where $X_{\a}$ is a standard basis in $\mathfrak{so}(3)$, satisfying generalized classical Yang--Baxter equation
 \cite{AT1,BabVia, Maillet} instead of the usual one.
 In one-spin-case our generalized Gaudin Hamiltonian has the following form \cite{SkrPLA2005, SkrJGP2006, SkrJPA2007}:
 \begin{equation*}
\hat{H}= \frac{1}{2}\sum_{\a,\b=1}^3 r^0_{\a\b}(\nu, \nu)\big(\hat{S}_{\a} \hat{S}_{\b}+ \hat{S}_{\b} \hat{S}_{\a}\big)+ \sum_{\a=1}^3 c_{\a}(\nu) \hat{S}_{\a}, 
\end{equation*}
 where $r^0_{\a\b}(\nu, \nu)$ are the components of the regular part of the classical $r$-matrix $r(u_1,u_2)$, $\nu$~is a~fixed non-singular \cite{SkrJMP2016} value of spectral parameter and $c_{\a}(\nu)$ are the components of the so-called shift element playing the role of an external field \cite{SkrJPA2007, SkrJGP2014}.

In the present short paper we show, that the Hamiltonian~\eqref{hamGinS0} is a one-spin generalized Gaudin Hamiltonian connected with the following $r$-matrix:
\begin{gather}
r(u,v)= \frac{\sqrt{u+j_{2}}\sqrt{v+j_{1}}}{u-v} X_{1}\otimes X_{1} + \frac{\sqrt{u+j_{1}}\sqrt{v+j_{2}}}{u-v} X_{2}\otimes X_{2}\nonumber \\
\hphantom{r(u,v)=}{}+ \frac{\sqrt{u+j_{1}}\sqrt{u+j_{2}}}{u-v} X_{3}\otimes X_{3}, \label{twellrmlim0}
\end{gather}
which is a $j_3\rightarrow \infty$ limit of the non-skew-symmetric elliptic $r$-matrix \cite{SkrPLA2005, SkrJMP2006, SkrJGP2006} defined on the $4:1$ unramified covering of the Weierstrass cubic $y^2=(u+j_1)(u+j_2)(u+j_3)$.

 The $r$-matrix~\eqref{twellrmlim0} possesses the following shift element defining integrable external magnetic field:
 \begin{equation*}
 c(u)= \sum_{\a=1}^2 \frac{ {\rm i} k_{\a}}{2 \sqrt{u+j_{\a}}}X_{\a}+ \frac{{\rm i} k_3}{2} X_3, 
 \end{equation*}
 where the constants $k_{\a}$, $\a = 1,2,3$, are arbitrary.
 It leads to the following quantum Hamiltonian:
 \begin{gather}
\hat{H} =
 {\rm i} k_{1} \sqrt{j_{2}}\hat{S}_{1} + {\rm i} k_{2} \sqrt{j_{1}}\hat{S}_{2}+
 {\rm i} k_3 \sqrt{j_{1}} \sqrt{j_{2}} \hat{S}_{3} - j_2 \hat{S}_{1}^2 -j_1 \hat{S}_{2}^2. \label{hamgGc00}
\end{gather}
We call it generalized Lipkin--Meshkov--Glick Hamiltonian.
 In the case $k_1=k_2=0$ it coincides~-- modulo the renaming of the constants~-- with the standard Lipkin--Meshkov--Glick Hamiltonian~\eqref{hamGinS0}.

Note also that the Hamiltonian~\eqref{hamgGc00} coincide -- up to the quadratic Casimir operator $\hat{C}_2= \hat{S}_{1}^2+\hat{S}_{2}^2+ \hat{S}_{3}^2$ and renaming of the constants -- with quantization of Zhukovsky--Volterra Hamiltonian \cite{Volt} which is written explicitly as follows:
\[
\hat{H} = b_{1} \hat{S}_{1} + b_{2} \hat{S}_{2}+ b_3 \hat{S}_{3} + a_1 \hat{S}_{1}^2 +a_2 \hat{S}_{2}^2+ a_3 \hat{S}_{3}^2.
\]
Here the constant $a_i$, $b_i$, $i= 1,2,3$, are arbitrary.

 We use the discovered connection of the generalized Lipkin--Meshkov--Glick Hamiltonian with the $r$-matrix~\eqref{twellrmlim0} in order to find its spectrum by means of the modified algebraic Bethe ansatz (mABA). Indeed in our paper \cite{SkrNPB2022} we have constructed the spectrum and Bethe vectors for the generalized Gaudin models with $N$ spins associated with the $r$-matrix~\eqref{twellrmlim0} using modified algebraic Bethe ansatz. In \cite{SkrNPB2022} we have applied the obtained results to find the spectrum of the Richardson-type model which is $N$-spin case of the Gaudin-type model with the same simplest fundamental representation of each one-spin $\mathfrak{sl}(2)\simeq \mathfrak{so}(3)$ algebra in the chain of $N$ spins. In the present letter we apply the mABA to one-spin case of the Gaudin-type model and arbitrary representation of the corresponding one-spin algebra and obtain the following spectrum of the Hamiltonian~\eqref{hamgGc00}:
 \begin{equation*}
h_{\pm}(v_1, v_2, \dots , v_N)=2 N \left((j_1+j_2) N \pm\frac{{\rm i}k_1 j_{2}+ k_2 j_{1}}{\sqrt{j_1-j_2}}+2 j_1 j_2 \sum_{k=1}^N \frac{1}{v_k} \right), 
\end{equation*}
where the rapidities $v_k$ satisfy the following set of modified Bethe equations:
 \begin{gather}
 -\frac{N (v_k+j_1)(v_k+j_2)}{v_k} \mp \frac{{\rm i}k_1 (v_k+j_{2})+ k_2 (v_k+j_{1})}{\sqrt{j_1-j_2}}\nonumber \\
 \qquad \quad{}+(2 v_k+j_1+j_2) + 2\sum_{n=1, \, n\neq k}^N \frac{(v_k+j_1)(v_k+j_2)}{v_k-v_n} \nonumber \\
 \qquad{}= \frac{1}{4}\big(k_+ \mp( N +1)\big)\big(k_- \mp( N +1)\big) \frac{v_k^{N}}{\prod\limits_{n=1,\, n\neq k}^{N}(v_k-v_n)}, \qquad k = 1,\dots ,N, \label{betheeq0}
\end{gather}
 corresponding to two sets of the modified Bethe vectors. Here $k_{\pm}=\frac{{\rm i} k_1+ k_2}{\sqrt{j_1-j_2}} \pm k_3$ and the number~$N$ is related with the representation space of $\mathfrak{sl}(2)\simeq \mathfrak{so}(3)$ and coincide with the fermion number.

 It is necessary to notice that for generic values of $k_i$ both sets of modified Bethe equations~\eqref{betheeq0} seem to produce the same modified Bethe vectors and corresponding set of the eigenvalues $\{h_{+}(v_1, v_2, \dots , v_N)\}$ and $\{h_{-}(v_1, v_2, \dots , v_N)\}$ coincide. For illustration purposes we solve the modified Bethe equations~\eqref{betheeq0} and explicitly find the spectrum of the Hamiltonian~\eqref{hamgGc00} in the cases $N=1,2$.

 In the end of the introduction we remark, that the modified Bethe ansatz in the context of the rational Gaudin model in arbitrary magnetic field has appeared in the paper \cite{FafibaultT}. In the context of the ``shifted'' trigonometric $r$-matrix and the corresponding Richardson-type models interacting with the environment the mABA has appeared in the papers \cite{Claeys, LukLinks}. We also remark that integrable models associated with certain limits of the non-skew-symmetric elliptic $r$-matrix have been considered also in the recent papers \cite{DF, SIL}.

 The structure of the present paper is the following: in Section~\ref{section2} we describe the $j_3\rightarrow \infty$ limit of the non-skew-symmetric elliptic $r$-matrix, in Section~\ref{section3} we describe its Lax algebra and one-spin Gaudin-type model. In Section~\ref{section4} we consider its fermionization and obtain the generalized LMG model. Section~\ref{section5} is devoted to the modified Bethe ansatz. In Section~\ref{section6} we conclude and discuss the open problems.

\section[Elliptic r-matrix in the j\_3 to infty limit]{Elliptic $\boldsymbol{r}$-matrix in the $\boldsymbol{j_3\rightarrow \infty}$ limit} \label{section2}
\subsection[Non-skew-symmetric elliptic r-matrix in so(3) basis]{Non-skew-symmetric elliptic $\boldsymbol{r}$-matrix in $\boldsymbol{\mathfrak{so}(3)}$ basis}
Let us consider the following tensor \cite{SkrPLA2005,SkrJGP2006,SkrNPB2019}:
\begin{equation}
r(u,v)=\sum_{\a=1}^3 \frac{u_{1}u_{2}u_{3}}{u-v} \frac{v_{\a}}{u_{\a}} X_{\a}\otimes X_{\a}, \label{twellrm}
\end{equation}
where $\{ X_{\a} \mid \a= 1,2,3\}$ is a basis over $\mathbb{C}$ in $\mathfrak{sl}(2)\simeq \mathfrak{so}(3)$ with the commutation relations repeating the commutation relations~\eqref{sscr}
and the functions $u_{\a}$, $v_{\a}$, $\a= 1,2,3$, are defined as follows:
\[
 u^2_{\a}=u+j_{\a}, \qquad v^2_{\a}=v+j_{\a}, \qquad \a= 1,2,3.
 \]
It is possible to show that~\eqref{twellrm} satisfies generalized classical Yang--Baxter equation \cite{AT1,BabVia, Maillet}:
\begin{gather*}
\big[r^{12}(u_1,u_2), r^{13}(u_1,u_3)\big]= \big[r^{23}(u_2,u_3),
r^{12}(u_1,u_2)\big]-\big[r^{32}(u_3,u_2), r^{13}(u_1,u_3)\big]. 
\end{gather*}
Here
\begin{gather*}
r^{12}(u_1,u_2)\equiv
\sum_{\a,\b=1}^{3}r_{\a\b}(u_1, u_2)X_{\a} \otimes
X_{\b}\otimes 1,\\
r^{13}(u_1,u_3)\equiv
\sum_{\a,\b=1}^{3}r_{\a\b}(u_1, u_3)X_{\a} \otimes 1
\otimes X_{\b}, \qquad \text{etc.}
\end{gather*}

The $r$-matrix~\eqref{twellrm} satisfies -- up to the overall multiplier $v_1 v_2 v_3$ -- the following condition:
\begin{gather}
r(u,v)=\frac{\Omega}{u-v}+ r_0(u,v), \label{reg}
\end{gather}
where $\Omega=\sum_{\a=1}^3 X_{\a}\otimes X_{\a}$ and $r_0(u,v)$ is a regular on the diagonal $u=v$ function.

An important characteristic of the $r$-matrix is the existence of the ``shift element'', i.e., \mbox{$c$-valued function}
$c(u)=\sum_{\a=1}^3 c_{\a}(u)X_{\a}$ satisfying the ``shift equation'' \cite{SkrJPA2007}:
\begin{equation*}
\big[r^{12}(u_1,u_2),c^{(1)}(u_1)\big]- \big[r^{21}(u_2,u_1),c^{(2)}(u_2)\big]=0, 
\end{equation*}
where $c^{(1)}(u_1)=c(u_1)\otimes 1$, $c^{(2)}(u_2)=1\otimes c(u_2)$.

 The following Proposition holds true \cite{SkrJPA2007,SkrJGP2014,SkrNPB2019}:
 \begin{Proposition}
 The following $\mathfrak{sl}(2)\simeq \mathfrak{so}(3)$-valued function is a shift element for the $r$-mat\-rix~\eqref{twellrm}:
 \begin{equation}
 c(u)=\sum_{\a=1}^3 \frac{ {\rm i} k_{\a}}{2 u_{\a}}X_{\a}, \label{shelel}
 \end{equation}
 where the constants $k_{\a}$, $\a= 1,2,3$, are arbitrary.
 \end{Proposition}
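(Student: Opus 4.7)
The plan is direct verification of the shift equation, exploiting the fact that the $r$-matrix is diagonal in the $\mathfrak{so}(3)$ basis. First I would write $r(u,v) = \sum_{\alpha=1}^3 r_\alpha(u,v)\, X_\alpha \otimes X_\alpha$ with $r_\alpha(u,v) = \frac{u_1 u_2 u_3\, v_\alpha}{(u-v) u_\alpha}$; since $\Omega = \sum_\alpha X_\alpha \otimes X_\alpha$ is symmetric under permutation of tensor factors, $r^{21}(u_2,u_1) = \sum_\alpha r_\alpha(u_2,u_1)\, X_\alpha \otimes X_\alpha$ as well. Using $[X_\alpha, X_\gamma]=\epsilon_{\alpha\gamma\delta} X_\delta$ together with the Leibniz rule in the tensor product, the two commutators in the shift equation expand to
\[
[r^{12}(u_1,u_2), c^{(1)}(u_1)] = \sum_{\alpha,\gamma,\delta} r_\alpha(u_1,u_2)\, c_\gamma(u_1)\, \epsilon_{\alpha\gamma\delta}\, X_\delta \otimes X_\alpha,
\]
and the analogous expression $\sum_{\alpha,\gamma,\delta} r_\alpha(u_2,u_1) c_\gamma(u_2) \epsilon_{\alpha\gamma\delta} X_\alpha \otimes X_\delta$ for the second commutator.

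Next I would match coefficients of each basis element $X_\mu \otimes X_\nu$. The diagonal case $\mu=\nu$ vanishes automatically because $\epsilon_{\mu\gamma\mu}=0$. For $\mu\neq\nu$, letting $\rho$ denote the third index, only $\gamma=\rho$ contributes, and the shift equation reduces (after using $\epsilon_{\mu\rho\nu}=-\epsilon_{\nu\rho\mu}$) to the single scalar condition
\[
r_\nu(u_1,u_2)\, c_\rho(u_1) + r_\mu(u_2,u_1)\, c_\rho(u_2) = 0
\]
for every permutation $\{\mu,\nu,\rho\}$ of $\{1,2,3\}$.

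Substituting $c_\rho(u) = \frac{ik_\rho}{2 u_\rho}$ and the explicit $r_\alpha$, the common factor $\frac{ik_\rho}{2(u_1-u_2)}$ and the sign from $u_2-u_1=-(u_1-u_2)$ cancel, reducing the condition to the purely multiplicative identity
\[
\big(u_1 u_2 u_3\big)\big|_{u=u_1}\cdot u_\mu(u_2)\, u_\nu(u_2)\, u_\rho(u_2) = \big(u_1 u_2 u_3\big)\big|_{u=u_2}\cdot u_\mu(u_1)\, u_\nu(u_1)\, u_\rho(u_1).
\]
The decisive observation is that the triple product $u_\mu u_\nu u_\rho$ is independent of the permutation of $\{1,2,3\}$, so both sides equal $(u_1 u_2 u_3)|_{u=u_1}\cdot (u_1 u_2 u_3)|_{u=u_2}$ and the identity holds tautologically.

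There is no real obstacle beyond notational bookkeeping, since the symbols $u_1,u_2$ play a double role as spectral parameters and as the functions $u_\alpha=\sqrt{u+j_\alpha}$. The structural content of the proposition is the conspiracy between the coefficients $v_\alpha/u_\alpha$ of the $r$-matrix and the reciprocal form $c_\gamma(u)\propto 1/u_\gamma$ of the shift element, which is precisely what assembles the permutation-symmetric triple product $u_1 u_2 u_3$ in the off-diagonal equations and makes them trivially satisfied for arbitrary constants $k_\alpha$.
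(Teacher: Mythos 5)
Your verification is correct: reducing the shift equation, via the diagonality of the $r$-matrix and the $\mathfrak{so}(3)$ structure constants, to the off-diagonal scalar conditions $r_\nu(u_1,u_2)\,c_\rho(u_1)+r_\mu(u_2,u_1)\,c_\rho(u_2)=0$ and then to the permutation invariance of the triple product $\sqrt{u+j_1}\sqrt{u+j_2}\sqrt{u+j_3}$ is exactly the right computation, and the diagonal components vanish for the reason you give. The paper itself supplies no proof of this Proposition (it is quoted from the author's earlier works), so a direct check of the shift equation is precisely what is called for, and yours goes through.
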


\subsection[The j\_3 to infty limit]{The $\boldsymbol{j_3\rightarrow \infty}$ limit}
\subsubsection[The so(3) basis]{The $\boldsymbol{\mathfrak{so}(3)}$ basis}

Let us consider the limit $j_3\rightarrow \infty$ in the $r$-matrix~\eqref{twellrm} rescaled by $\frac{1}{\sqrt{j_3}}$.
We will have
\begin{equation}
r(u,v)=\frac{u_2 v_1}{u-v} X_{1}\otimes X_{1} + \frac{u_1 v_2}{u-v} X_{2}\otimes X_{2} + \frac{u_1u_2}{u-v} X_{3}\otimes X_{3}.\label{twellrmlim}
\end{equation}
It is easy to show that this $r$-matrix satisfies -- up to the factor $v_1 v_2$ -- the condition~\eqref{reg}.

Let us now find what happens with the shift element~\eqref{shelel} in this limit.
 \begin{Proposition}[\cite{SkrNPB2019}]
 The following $\mathfrak{so}(3)$-valued function is a shift element for the $r$-mat\-rix~\eqref{twellrmlim}:
 \begin{equation}
 c(u)= \sum_{\a=1}^2 \frac{ {\rm i} k_{\a}}{2 u_{\a}}X_{\a}+ \frac{{\rm i} k_3}{2} X_3, \label{shelellim}
 \end{equation}
 where the constants $k_{\a}$, $\a= 1,2,3$, are arbitrary.
 \end{Proposition}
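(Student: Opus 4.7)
My plan is to obtain this Proposition as a controlled limit of the Proposition stated just above for the unrescaled $r$-matrix \eqref{twellrm}. The key observation is that the shift equation
\[
\big[r^{12}(u_1,u_2),c^{(1)}(u_1)\big] - \big[r^{21}(u_2,u_1),c^{(2)}(u_2)\big] = 0
\]
is homogeneous of degree one in $r$ and of degree one in $c$, and is in particular linear in each of the constants $k_\alpha$. Hence rescaling $r \to r/\sqrt{j_3}$, as done to pass from \eqref{twellrm} to \eqref{twellrmlim}, does not affect its validity, and I am free to simultaneously rescale $k_3 \to \sqrt{j_3}\, k_3$ while leaving $k_1, k_2$ fixed.

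Taking $j_3 \to \infty$ then produces a well-defined pair: the rescaled $r$-matrix converges to \eqref{twellrmlim}, and the rescaled shift element converges to
\[
\sum_{\alpha=1}^{2} \frac{{\rm i} k_\alpha}{2 u_\alpha} X_\alpha + \lim_{j_3 \to \infty} \frac{{\rm i} \sqrt{j_3}\, k_3}{2 \sqrt{u+j_3}} X_3 = \sum_{\alpha=1}^{2} \frac{{\rm i} k_\alpha}{2 u_\alpha} X_\alpha + \frac{{\rm i} k_3}{2} X_3,
\]
which is precisely \eqref{shelellim}. Since both $r$ and $c$ converge pointwise as meromorphic functions of $(u,v)$ away from the diagonal $u=v$, and since the shift equation is a finite polynomial identity in the tensor components of $r$ and $c$, the limit equation holds by continuity.

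The only step demanding attention is to ensure that the rescalings of $r$ (by $1/\sqrt{j_3}$) and of $c$ (through $k_3$, by $\sqrt{j_3}$) cannot generate compensating divergences in the commutators. This is immediate: the factor $1/\sqrt{j_3}$ is a global prefactor on every tensor component of $r$, while $\sqrt{j_3}$ enters only through the single scalar $c_3$, so each term of the shift equation remains uniformly finite as $j_3 \to \infty$. Should a self-contained computational verification be preferred, the diagonality of \eqref{twellrmlim} in the basis $\{X_\alpha \otimes X_\beta\}$ reduces the shift equation to three scalar identities, one for each off-diagonal tensor component $X_\alpha \otimes X_\delta$ with $\alpha \ne \delta$; each identity follows routinely from $u_\alpha^2 = u+j_\alpha$ together with the $\mathfrak{so}(3)$ structure constants $\epsilon_{\alpha\beta\gamma}$ of \eqref{sscr}.
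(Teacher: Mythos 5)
Your proposal is correct and follows essentially the same route as the paper, which introduces this Proposition precisely as the outcome of tracking the shift element~\eqref{shelel} through the $j_3\rightarrow\infty$ limit of the rescaled $r$-matrix (deferring details to the cited reference). You make explicit the two points the paper leaves implicit --- that one must rescale $k_3\rightarrow\sqrt{j_3}\,k_3$ (harmless since the $k_\a$ are arbitrary) and that the bilinear shift equation passes to the limit term by term --- and your computation of the limiting components matches~\eqref{shelellim}.
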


\begin{Remark}Rigorously speaking, in the limit $j_3\rightarrow \infty$ the curve $y^2=(u+j_1)(u+j_2)(u+j_3)$ 
 stops to be elliptic.
 Nevertheless the $r$-matrix~\eqref{twellrmlim} and the corresponding integrable systems are still completely anisotropic, since all anisotropy parameters $j_{\a}$, $\a= 1,2,3$, are not equal in the considered case.
\end{Remark}

\subsubsection[The gl(2) basis]{The $\boldsymbol{\mathfrak{gl}(2)}$ basis}
Let us now use the connection of the $\mathfrak{so}(3)$ basis in~$\mathfrak{sl}(2)$ with the standard~$\mathfrak{gl}(2)$ basis $\{ X_{ij} \mid i,j = 1,2\}$:
\begin{gather*}
 X_1 = -\frac{{\rm i}}{2}(X_{21}+X_{12}),\qquad \! X_2 = -\frac{1}{2}(X_{12}-X_{21}), \qquad \! X_3 = -\frac{{\rm i}}{2}(X_{11}-X_{22}), \qquad \! {\rm i}=\sqrt{-1}, 
\end{gather*}
where the commutation relations among $X_{ij}$, $X_{kl}$ repeat the commutation relations~\eqref{pso}.

In the~$\mathfrak{gl}(2)$ basis the $r$-matrix~\eqref{twellrmlim} acquires -- up to the coefficient $\frac{1}{4}$ -- the following explicit form:
\begin{gather}
r(u,v) =   \frac{u_1 u_2 (X_{11}-X_{22})\otimes (X_{11}-X_{22})}{(v-u)}+\frac{(v_1u_2+u_1v_2)(X_{12}\otimes X_{21}+X_{21}\otimes X_{12})}{(v-u)} \nonumber \\
\hphantom{r(u,v) =}{}+\frac{(v_1u_2-u_1v_2)(X_{12}\otimes X_{12}+X_{21}\otimes X_{21})}{(v-u)}. \label{twellrmlimgl2-0}
\end{gather}
The $r$-matrix~\eqref{twellrmlimgl2-0} satisfies -- up to the multiplier $(- 2 v_1 v_2)$ -- the condition~\eqref{reg} with
\[
\Omega= \frac{1}{2} (X_{11}-X_{22})\otimes (X_{11}-X_{22}) + X_{12}\otimes X_{21}+X_{21}\otimes X_{12}.
\]
Instead of the $\mathfrak{sl}(2)\otimes \mathfrak{sl}(2)$-valued $r$-matrix $r(u,v)$ given by the formula~\eqref{twellrmlimgl2-0} it will be convenient to consider the following equivalent $\mathfrak{gl}(2)\otimes \mathfrak{gl}(2)$-valued $r$-matrix:
 \begin{gather}
r(u,v) =  \frac{2 u_1 u_2 (X_{11}\otimes X_{11} +X_{22}\otimes X_{22})}{(v-u)}+\frac{(v_1u_2+u_1v_2)(X_{12}\otimes X_{21}+X_{21}\otimes X_{12})}{(v-u)} \nonumber\\
\hphantom{r(u,v) =}{}+\frac{(v_1u_2-u_1v_2)(X_{12}\otimes X_{12}+X_{21}\otimes X_{21})}{(v-u)}. \label{twellrmlimgl2}
\end{gather}
The $r$-matrix~\eqref{twellrmlimgl2} satisfies -- up to the multiplier $(-2 v_1v_2)$ -- the condition~\eqref{reg} with
\[
\Omega= X_{11}\otimes X_{11} +X_{22}\otimes X_{22}+ X_{12}\otimes X_{21}+X_{21}\otimes X_{12}.
\]
The shift element~\eqref{shelellim} is rewritten in terms of the~$\mathfrak{gl}(2)$ basis -- up to the coefficient $\frac{1}{4}$ -- as follows:
\begin{equation}
 c(u)= k_3(X_{11}-X_{22})+ \bigg(\frac{k_{1}}{ u_1}- \frac{{\rm i} k_{2}}{u_2 }\bigg)X_{12}+ \bigg(\frac{k_{1}}{ u_1}+ \frac{{\rm i} k_{2}}{u_2}\bigg)X_{21}. \label{shelellimgl2}
 \end{equation}
It satisfies the shift equation for both $r$-matrices~\eqref{twellrmlimgl2-0} and~\eqref{twellrmlimgl2}.
\section{Lax algebra, generating function and one-spin model } \label{section3}
\subsection[Lax algebra in the j\_3 to infty limit: the general case]{Lax algebra in the $\boldsymbol{j_3 \rightarrow \infty}$ limit: the general case}
\subsubsection[The so(3) basis and generating function of the quantum integrals]{The $\boldsymbol{\mathfrak{so}(3)}$ basis and generating function of the quantum integrals}
Let us at first consider the Lax matrix corresponding to the $r$-matrix~\eqref{twellrmlim} in the natural $\mathfrak{so}(3)$-basis:
\begin{equation*}
\hat{L}(u)=\sum_{\a=1}^3 \hat{L}_{\a}(u) X_{\a}.
\end{equation*}
The corresponding Lax algebra
\begin{equation*}
\big[\hat{L}^{(1)}(u_1),\hat{L}^{(2)}(u_2)\big]=\big[r^{12}(u_1,u_2),\hat{L}^{(1)}(u_1)\big]-
\big[r^{21}(u_2,u_1),\hat{L}^{(2)}(u_2)\big], 
\end{equation*}
where $\hat{L}^{(1)}(u_1)= \hat{L}(u_1)\otimes 1$, $\hat{L}^{(2)}(u_2)=1 \otimes \hat{L}(u_2)$,
 has very simple component form:
 \begin{gather*}
\big[\hat{L}_{1}(u),\hat{L}_{2}(v)\big]= \frac{ u_1 v_2 }{(u-v)} \big(\hat{L}_3(u)- \hat{L}_3(v)\big), \\
\big[\hat{L}_{1}(u),\hat{L}_{3}(v)\big]= -\frac{ u_2 u_1 }{(u-v)} \hat{L}_2(u)+ \frac{ u_1 v_2}{(u-v)} \hat{L}_2(v), \\
\big[\hat{L}_{2}(u),\hat{L}_{3}(v)\big]= \frac{ u_2 u_1 }{(u-v)} \hat{L}_1(u)- \frac{ u_2 v_1}{(u-v)} \hat{L}_1(v), \\
\big[\hat{L}_{1}(u),\hat{L}_{1}(v)\big]= \big[\hat{L}_{2}(u),\hat{L}_{2}(v)\big]= \big[\hat{L}_{3}(u),\hat{L}_{3}(v)\big]=0.
\end{gather*}

Let us consider the following quadratic functions in the elements of the Lax algebra:
\begin{equation}
\hat{\tau}^{(2)}(u)=-\big(\hat{L}^2_{1}(u)+ \hat{L}^2_{2}(u)+\hat{L}^2_{3}(u)\big). \label{tauso3}
\end{equation}
As it follows from the general results of \cite{SkrJMP2007} the generating function $\hat{\tau}^{(2)}(u)$, $\hat{\tau}^{(2)}(v)$ commute
\[
\big[\hat{\tau}^{(2)}(u), \hat{\tau}^{(2)}(v)\big]=0,
\]
since the $r$-matrix -- modulo the overall multiplier $v_1 v_2$ -- satisfy the condition~\eqref{reg}.

\subsubsection[The gl(2) basis and the generating function of quantum integrals]{The $\boldsymbol{\mathfrak{gl}(2)}$ basis and the generating function of quantum integrals}
Let us consider the Lax matrix that correspond to the $r$-matrix~\eqref{twellrmlimgl2} in the natural~$\mathfrak{gl}(2)$-basis:
\begin{gather}
\hat{L}(u)=\sum_{i,j=1}^2 \hat{L}_{ij}(u)X_{ij}. \label{Lax}
\end{gather}
The corresponding Lax algebra
\[
\big[\hat{L}^{(1)}(u_1),\hat{L}^{(2)}(u_2)\big]=\big[r^{12}(u_1,u_2),\hat{L}^{(1)}(u_1)\big]-
\big[r^{21}(u_2,u_1),\hat{L}^{(2)}(u_2)\big],
\]
where $\hat{L}^{(1)}(u_1)= \hat{L}(u_1)\otimes 1$, $\hat{L}^{(2)}(u_2)=1 \otimes \hat{L}(u_2)$,
 has the following component form in this basis:
\begin{subequations}\label{laxalj3inf}
\begin{gather}
\big[\hat{L}_{11}(u),\hat{L}_{12}(v)\big]= \frac{ (u_2 v_1+ u_1 v_2)}{(u-v)} \hat{L}_{12}(u) +\frac{(u_1 v_2-u_2 v_1)}{(u-v)} \hat{L}_{21}(u)-\frac{2 v_1 v_2}{(u-v)} \hat{L}_{12}(v),
\\
\big[\hat{L}_{11}(u),\hat{L}_{21}(v)\big]=\frac{ (u_2 v_1 - u_1 v_2)}{(u-v)} \hat{L}_{12}(u) -\frac{(u_2 v_1+u_1v_2)}{(u-v)} \hat{L}_{21}(u)+\frac{2 v_1 v_2}{(u-v)} \hat{L}_{21}(v),
\\
\big[\hat{L}_{12}(u),\hat{L}_{21}(v)\big]= \frac{ (u_2 v_1+u_1 v_2)}{(u-v)}\big(\hat{L}_{11}(u)- \hat{L}_{22}(u)- \hat{L}_{11}(v)+ \hat{L}_{22}(v)\big),
\\
\big[\hat{L}_{12}(u),\hat{L}_{12}(v)\big]= \frac{ (u_2 v_1-u_1 v_2)}{(u-v)}\big(\hat{L}_{11}(u)- \hat{L}_{22}(u)- \hat{L}_{11}(v)+ \hat{L}_{22}(v)\big),
\\
\big[\hat{L}_{21}(u),\hat{L}_{21}(v)\big]= - \frac{ (u_2 v_1-u_1 v_2)}{(u-v)}\big(\hat{L}_{11}(u)- \hat{L}_{22}(u)- \hat{L}_{11}(v)+ \hat{L}_{22}(v)\big),
\\
\big[\hat{L}_{22}(u),\hat{L}_{12}(v)\big]=-\big[\hat{L}_{11}(u),\hat{L}_{12}(v)\big] ,
\\
\big[\hat{L}_{22}(u),\hat{L}_{21}(v)\big]=-\big[\hat{L}_{11}(u),\hat{L}_{21}(v)\big],
\\
\big[\hat{L}_{11}(u),\hat{L}_{11}(v)\big]= \big[\hat{L}_{11}(u),\hat{L}_{22}(v)\big]= \big[\hat{L}_{22}(u),\hat{L}_{22}(v)\big]=0.
\end{gather}
\end{subequations}

From the commutation relations~\eqref{laxalj3inf} immediately follows that
\[
\hat{\tau}^{(1)}(u)=\hat{L}_{11}(u)+\hat{L}_{22}(u)
\]
 is a central element of the Lax algebra.

The generating function of the quadratic integrals is written in terms of~$\mathfrak{gl}(2)$ basis as follows:
\begin{gather}
\hat{\tau}^{(2)}(u)=\frac{1}{2}\bigl(\hat{L}^2_{11}(u)+ \hat{L}^2_{22}(u)+ \hat{L}_{12}(u)\hat{L}_{21}(u)+ \hat{L}_{21}(u)\hat{L}_{12}(u)\bigr). \label{taugl2}
\end{gather}

Since the $r$-matrix -- modulo the multiplier $-2 v_1v_2$ -- satisfies the condition~\eqref{reg} and $\hat{\tau}^{(1)}(u)$ is a central element of its Lax algebra,
the generating functions $\hat{\tau}^{(2)}(u)$, $\hat{\tau}^{(2)}(v)$ commute \cite{SkrNPB2021}:
\[
\big[\hat{\tau}^{(2)}(u), \hat{\tau}^{(2)}(v)\big]=0.
\]
 The generating function $\hat{\tau}^{(2)}(u)$ given by~\eqref{taugl2} coincide with the generating function $\hat{\tau}^{(2)}(u)$ given by~\eqref{tauso3} up to the square of the central element, namely, up to
 $\big(\frac{\hat{\tau}^{(1)}(u)}{2}\big)^2= \frac{1}{4} \big(\hat{L}_{11}(u)+ \hat{L}_{22}(u)\big)^2$.

\begin{Remark}
 In what follows  we will, slightly abusing the algebro-geometric language, use the explicit formulae for $u_{\a}$, $v_{\a}$ in terms of square roots:
 \[
 u_{\a}=\sqrt{u+j_{\a}},\qquad v_{\a}=\sqrt{v+j_{\a}}, \qquad \a= 1,2,3.
 \]
\end{Remark}

\subsection{Lax algebra and Hamiltonian: case of the one-spin model}

\subsubsection[The so(3) basis]{The $\boldsymbol{\mathfrak{so}(3)}$ basis}
Let us explicitly describe the Lax matrix of a quantum spin in the external magnetic field corresponding to the considered $r$-matrix in the forms~\eqref{twellrmlim} and~\eqref{twellrmlimgl2-0}.

We start with the form~\eqref{twellrmlim} corresponding to the $\mathfrak{so}(3)$ basis. As it follows from the general theory \cite{SkrPLA2005, SkrJGP2006, SkrJPA2007, SkrNPB2022}, the Lax matrix with the first order pole in the point $u=0$ is written as follows:
\begin{equation*}
\hat{L}(u)= -\bigg(\frac{\sqrt{j_{2}}\sqrt{u+j_{1}}}{u} \hat{S}_{1} X_{1} + \frac{\sqrt{j_{1}}\sqrt{u+j_{2}}}{u} \hat{S}_{2} X_{2} + \frac{\sqrt{j_{1}}\sqrt{j_{2}}}{u} \hat{S}_{3} X_{3}\bigg) +c(u),
\end{equation*}
where $\hat{S}_{\a}$ is the $\a$-th component of the spin operator with the commutation relations~\eqref{sscr},
 the shift element $c(u)$ is given by the formula~\eqref{shelellim}.

The corresponding integrable quantum Hamiltonian is obtained from the generating func\-tion~$\hat{\tau}^{(2)}(u)$:
\begin{equation*}
\hat{\tau}^{(2)}(u)=-\big(\hat{L}^2_1(u)+ \hat{L}^2_2(u)+ \hat{L}^2_3(u)\big).
\end{equation*}
In more details, we will consider the following Hamiltonian:
\[
\hat{H}= \mathrm{res}_{u=0} \hat{\tau}^{(2)}(u).
\]
The direct calculation shows that it has the form~\eqref{hamgGc00}.

\subsubsection[The gl(2) basis]{The $\boldsymbol{\mathfrak{gl}(2)}$ basis}
Let us now rewrite the Lax matrix and Gaudin-type Hamiltonians in an external magnetic field in~$\mathfrak{gl}(2)$ basis corresponding to the $r$-matrix~\eqref{twellrmlimgl2-0}.
It has the form~\eqref{Lax} with the following components:
\begin{subequations}\label{onespinlax}
\begin{gather}
\hat{L}_{11}(u) = \frac{2\sqrt{j_1}\sqrt{j_2}\hat{S}_{11}}{u}+k_3,
\\
 \hat{L}_{22}(u) = \frac{2\sqrt{j_1}\sqrt{j_2} \hat{S}_{22}}{u}-k_3,
\\
\hat{L}_{12}(u) = \frac{\sqrt{u+j_1}\sqrt{j_2}\big(\hat{S}_{21}+ \hat{S}_{12}\big)+\sqrt{j_1}\sqrt{u+j_2}\big(\hat{S}_{21}-\hat{S}_{12}\big)}{u}\nonumber\\
\hphantom{\hat{L}_{12}(u) =}{}+\bigg(\frac{k_{1}}{ \sqrt{u+j_{1}}}-{\rm i} \frac{k_{2}}{ \sqrt{u+j_{2}}}\bigg),
\\
\hat{L}_{21}(u) = \frac{\sqrt{u+j_1}\sqrt{j_2}\big(\hat{S}_{21}+ \hat{S}_{12}\big)-\sqrt{j_1}\sqrt{u+j_2}\big(\hat{S}_{21}-\hat{S}_{12}\big)}{u}\nonumber\\
\hphantom{\hat{L}_{21}(u) =}{}+ \bigg(\frac{k_{1}}{ \sqrt{u+j_{1}}}+{\rm i} \frac{k_{2}}{ \sqrt{u+j_{2}}}\bigg),
\end{gather}
\end{subequations}
where we have used that the components of the shift element $c(u)$ are given by the formula~\eqref{shelellimgl2}.

The components of spins $\hat{S}_{ij}$, $i,j = 1,2$, satisfy commutation relations~\eqref{pso}.
The corresponding integrable Hamiltonian \cite{SkrJMP2009} is obtained from the generating func\-tion~$\hat{\tau}^{(2)}(u)$:
\[
\hat{\tau}^{(2)}(u)=\frac{1}{2}\big(\hat{L}^2_{11}(u)+ \hat{L}^2_{22}(u)+ \hat{L}_{12}(u)\hat{L}_{21}(u)+ \hat{L}_{21}(u)\hat{L}_{12}(u)\big).
\]

More explicitly, we will have that one-spin Gaudin-type Hamiltonian $\hat{H}$
\[
\hat{H}= \mathrm{res}_{u=0} \hat{\tau}^{(2)}(u),
\]
 has the following explicit form:
\begin{gather}
\hat{H} =  (j_1+j_2)\big(\hat{S}_{12}\hat{S}_{21}+ \hat{S}_{21}\hat{S}_{12} \big)-
(j_1-j_2)\big( \hat{S}_{12}^2 + \hat{S}_{21}^2\big) + 2\sqrt{j_1} \sqrt{j_2} k_3 \big(\hat{S}_{11}- \hat{S}_{22}\big)\nonumber \\
\hphantom{\hat{H} =}{}+ 2 \sqrt{j_2} k_{1} \big(\hat{S}_{12}+ \hat{S}_{21}\big) + 2 i \sqrt{j_1} k_{2} \big(\hat{S}_{21}-\hat{S}_{12}\big). \label{hamGinS}
\end{gather}

\begin{Remark}Observe that the Hamiltonian~\eqref{hamGinS} is four times Hamiltonian~\eqref{hamgGc00}. This difference is explained by the renormalisation of the corresponding $r$-matrix we have performed after passing to~$\mathfrak{gl}(2)$ basis.
\end{Remark}

\begin{Remark} Observe also that in the case $k_1=k_2=0$ the Hamiltonian $\hat{H}$ coincide with the spin form of the Lipkin--Meshkov--Glick Hamiltonian and in the case partial case $k_1=k_2$ it is equivalent to the so-called ``extended'' Lipkin--Meshkov--Glick Hamiltonian \cite{RRCS}.
\end{Remark}

\section{Fermionization and the LMG model} \label{section4}
\subsection{Fermionization}
Having obtained quantum integrable spin system it is possible to
derive, using them, integrable fermion systems. For this purpose
it is necessary to consider a realization of the corresponding
spin operators in terms of fermion creation-anihilation
operators.

Let $c_{j,\s'}$, $c^{\dag}_{i,\s}$, $i, j = 1, 2,\dots , N$, $\s, \s'\in \{+,-\}$ be fermion creation-anihilation operators, i.e.,
\[
c^{\dag}_{i,\s}c_{j,\s'} +
c_{j,\s'}c^{\dag}_{i,\s}=\d_{\s\s'}\d_{ij},\qquad
c^{\dag}_{i,\s}c^{\dag}_{j,\s'} +
c^{\dag}_{j,\s'}c^{\dag}_{i,\s}=0,\qquad c_{i,\s}c_{j,\s'} +
c_{j,\s'}c_{i,\s}=0.
\]
By direct calculation it is possible to show that the following
formulae:
\begin{equation}
\hat{S}_{12}= \sum_{j=1}^N c^{\dag}_{j,+}c_{j,-}, \qquad
\hat{S}_{21}= \sum_{j=1}^N c^{\dag}_{j,-}c_{j,+}, \qquad
\hat{S}_{11}=\sum_{j=1}^N c^{\dag}_{j,+}c_{j,+}, \qquad
\hat{S}_{22}= \sum_{j=1}^N c^{\dag}_{j,-}c_{j,-} \label{fermioniz}
\end{equation}
provide realization of the Lie algebra~$\mathfrak{gl}(2)$.

\begin{Remark}
Note, that after the restriction to the
subalgebra~$\mathfrak{sl}(2)$ we obtain, substituting
$\hat{S}_{12}=\hat{S}_{+}$,
$\hat{S}_{21}=\hat{S}_{-}$,
$\hat{S}_{11}-\hat{S}_{22}=2 {\rm i} \hat{S}_{3}$, the
fermionization of the Lie algebra~$\mathfrak{sl}(2)$ given by the following formulae:
\begin{equation*}
\hat{S}_{+}= \sum_{j=1}^N c^{\dag}_{j,+}c_{j,-}, \qquad
\hat{S}_{-}= \sum_{j=1}^N c^{\dag}_{j,-}c_{j,+}, \qquad
\hat{S}_{3}=-\frac{{\rm i}}{2} \sum_{j=1}^N \big(c^{\dag}_{j,+}c_{j,+}- c^{\dag}_{j,-}c_{j,-}\big). 
\end{equation*}
\end{Remark}

\subsection{The LMG Hamiltonian}
In the partial case $k_1=k_2=0$, applying to the Hamiltonian~\eqref{hamGinS} the fermionization formulae~\eqref{fermioniz}, we obtain the following
integrable fermion Hamiltonian:
\begin{gather}
\hat{H} =  (j_1+j_2)\left( \sum_{i,j=1}^N c^{\dag}_{i,+}c_{i,-} c^{\dag}_{j,-}c_{j,+} + \sum_{i,j=1}^N c^{\dag}_{i,-}c_{i,+} c^{\dag}_{j,+}c_{j,-} \right) \nonumber\\
\hphantom{\hat{H} =}{}-(j_1-j_2)\left( \sum_{i,j=1}^N c^{\dag}_{i,+}c_{i,-} c^{\dag}_{j,+}c_{j,-}+ \sum_{i,j=1}^N c^{\dag}_{i,-}c_{i,+} c^{\dag}_{j,-}c_{j,+}\right) \nonumber\\
\hphantom{\hat{H} =}{}+ 2\sqrt{j_1} \sqrt{j_2} k_3 \sum_{j=1}^N \big(c^{\dag}_{j,+}c_{j,+}- c^{\dag}_{j,-}c_{j,-}\big). \label{hamLMG}
\end{gather}
This is the famous Lipkin--Meshkov--Glick Hamiltonian, which can -- up to a multiple of Casimir operator/ number of particle operator $\hat{N}= \sum_{i,j=1}^N \big(c^{\dag}_{i,+}c_{i,+}+ c^{\dag}_{j,-}c_{j,-}\big)$ -- be rewritten as follows:
\begin{gather}
\hat{H}_{\rm LMG} =  \frac{\epsilon}{2} \sum_{\sigma= \pm } \sum_{j=1}^N \sigma c^{\dag}_{j,\sigma}c_{j,\sigma} - \frac{W}{2} \sum_{i,j=1}^N\big( c^{\dag}_{i,+} c^{\dag}_{j,-}c_{j,+}c_{i,-} + c^{\dag}_{i,-} c^{\dag}_{j,+} c_{j,-}c_{i,+} \big)\nonumber\\
\hphantom{\hat{H}_{\rm LMG} =}{}-\frac{V}{2} \sum_{i,j=1}^N \big(c^{\dag}_{i,+}c^{\dag}_{j,+} c_{j,-}c_{i,-} + c^{\dag}_{i,-} c^{\dag}_{j,-} c_{j,+}c_{i,+}\big), \label{hamLMG'}
\end{gather}
where $\e=4 \sqrt{j_1} \sqrt{j_2} k_3$, $W=-2(j_1+j_2)$, $V=2 (j_1-j_2)$.

The Hamiltonian~\eqref{hamLMG'} is a particular example of the following general fermion Hamiltonian:
\[
\hat{H} = \frac{\epsilon}{2} \sum_{\sigma= \pm } \sum_{j=1}^N \sigma c^{\dag}_{j,\sigma}c_{j,\sigma}
+\sum_{\sigma_1, \sigma_2, \sigma_3, \sigma_4 = \pm }\, \sum_{i,j,k,l=1}^N V_{i,j,\sigma_1, \sigma_2, \sigma_3, \sigma_4} c^{\dag}_{i, \sigma_1} c^{\dag}_{j, \sigma_2}c_{j, \sigma_3} c_{i, \sigma_4}.
\]

\begin{Remark}
The case $V=0$ of the Hamiltonian~\eqref{hamLMG'}, i.e., the case $j_1=j_2$ of the Hamiltonian~\eqref{hamLMG}, is treated relatively simply due to the Cartan-invariance of the corresponding classical $r$-matrix.
 The diagonalization of the Hamiltonian~\eqref{hamLMG} in the case $j_1\neq j_2$ is more complicated. It will be performed in the next section of this article.
Our main tool will be the (modified) algebraic Bethe ansatz.
\end{Remark}

\section{Modified Bethe ansatz} \label{section5}
\subsection{The ABA basis in the Lax algebra and the gauge transformation}
In order to proceed with the algebraic Bethe ansatz it is necessary to consider a special basis in the Lax algebra~\eqref{laxalj3inf}.
In more details, let us consider the following linear functions on the Lax algebra \cite{SkrNPB2022}:
\begin{subequations}\label{ABC}
\begin{gather}
\hat{A}(u)=\frac{1}{2}\big(\hat{L}_{11}(u)+ \hat{L}_{22}(u)\big)-\frac{{\rm i} \sqrt{j_1-j_2} }{2(\sqrt{u+j_1}-\sqrt{u+j_2})} \hat{L}_{12}(u) \nonumber\\
\phantom{\hat{A}(u)=}{}+{\rm i}\frac{\sqrt{u+j_1}-\sqrt{u+j_2}}{2\sqrt{j_1-j_2}} \hat{L}_{21}(u),
\\
\hat{B}(u)=-\frac{{\rm i}}{2}\big(\hat{L}_{11}(u)-\hat{L}_{22}(u)\big)-\frac{ \sqrt{j_1-j_2} }{2(\sqrt{u+j_1}-\sqrt{u+j_2})}\hat{L}_{12}(u) \nonumber\\
\phantom{\hat{B}(u)=}{}-\frac{\sqrt{u+j_1}-\sqrt{u+j_2}}{2\sqrt{j_1-j_2}} \hat{L}_{21}(u),
\\
\hat{C}(u)=\frac{{\rm i}}{2}\big(\hat{L}_{11}(u)- \hat{L}_{22}(u)\big)-\frac{ \sqrt{j_1-j_2} }{2(\sqrt{u+j_1}-\sqrt{u+j_2})} \hat{L}_{12}(u) \nonumber\\
\phantom{\hat{C}(u)=}{}-\frac{\sqrt{u+j_1}-\sqrt{u+j_2}}{2\sqrt{j_1-j_2}} \hat{L}_{21}(u),
\\
\hat{D}(u)= \frac{1}{2}\big(\hat{L}_{11}(u)+ \hat{L}_{22}(u)\big)+\frac{{\rm i} \sqrt{j_1-j_2} }{2(\sqrt{u+j_1}+\sqrt{u+j_2})} \hat{L}_{12}(u) \nonumber\\
\phantom{\hat{D}(u)=}{}-{\rm i}\frac{\sqrt{u+j_1}+\sqrt{u+j_2}}{2\sqrt{j_1-j_2}} \hat{L}_{21}(u).
\end{gather}
\end{subequations}
In terms of these functions the generating function $\hat{\tau}^{(2)}(u)$ given by~\eqref{taugl2} is written as follows:
\begin{equation*}
\hat{\tau}^{(2)}(u)=\frac{1}{2}\bigl(\hat{A}^2(u) + \hat{D}^2(u)+ \hat{B}(u) \hat{C}(u)+ \hat{C}(u) \hat{B}(u)\bigr). 
\end{equation*}
The basis in the Lax algebra consisting of the above $A-B-C-D$ functions is the basis suitable for the algebraic Bethe ansatz. We will hereafter call it {\it ABA basis}.
It is possible to show \cite{SkrNPB2022} that
\[
\hat{A}(u)=\hat{L}_{11}^g(u), \qquad \hat{D}(u)=\hat{L}_{22}^g(u), \qquad \hat{B}(u)=\hat{L}_{21}^g(u), \qquad \hat{C}(u)=\hat{L}_{12}^g(u),
\]
where
\[
 \hat{L}^g(u)=g^{-1}(u) \hat{L}(u) g(u)
\]
and $g(u)$ is a two by two numerical matrix of the following explicit form:
\begin{equation}
g(u)=\left(
 \begin{matrix}
 -{\rm i}\dfrac{\sqrt{u+j_1}-\sqrt{u+j_2}}{\sqrt{j_1-j_2}} & \dfrac{\sqrt{u+j_1}-\sqrt{u+j_2}}{\sqrt{j_1-j_2}} \vspace{2mm}\\
 -1 & {\rm i}
 \end{matrix}
 \right), \qquad {\rm i}=\sqrt{-1}. \label{gt}
\end{equation}
 The shift element~\eqref{shelellimgl2} under the above similarity transformation acquires the following form:
\begin{gather*}
 c(u)=  -\frac{1}{\sqrt{j_1-j_2}}\bigg({\rm i}k_1 \frac{ \sqrt{u+j_{2}}}{\sqrt{u+j_{1}}}+ k_2 \frac{\sqrt{u+j_{1}}}{ \sqrt{u+j_{2}}}\bigg)(X_{11}-X_{22})\nonumber\\
\hphantom{c(u)=}{}+ \bigg( \frac{ - k_1+{\rm i} k_2}{\sqrt{j_1-j_2}}+{\rm i} k_3 \bigg)X_{12}+ \bigg( \frac{- k_1+ {\rm i} k_2}{\sqrt{j_1-j_2}}-{\rm i} k_3 \bigg)X_{21}. 
 \end{gather*}

\subsection{Lax matrix and spin Hamiltonian in the new basis}
Using the formulae~\eqref{ABC} and the explicit form of the components of one-spin Lax matrix~\eqref{onespinlax} it is easy to show that the components $\hat{L}^g_{ij}(u)$ of the transformed one-spin Lax matrix are written as follows:
\begin{subequations}\label{ABCDinT}
\begin{gather}
\hat{A}(u)=\hat{L}^g_{11}(u) = \frac{2\sqrt{u+j_1}\sqrt{u+j_2}\hat{T}_{11}}{u} -\frac{1}{\sqrt{j_1-j_2}}\bigg({\rm i}k_1 \frac{ \sqrt{u+j_{2}}}{\sqrt{u+j_{1}}}+ k_2 \frac{\sqrt{u+j_{1}}}{ \sqrt{u+j_{2}}}\bigg),
\\
\hat{D}(u)= \hat{L}^g_{22}(u) = \frac{2\sqrt{u+j_1}\sqrt{u+j_2} \hat{T}_{22}}{u}+\frac{1}{\sqrt{j_1-j_2}} \bigg({\rm i}k_1 \frac{ \sqrt{u+j_{2}}}{\sqrt{u+j_{1}}}+ k_2 \frac{\sqrt{u+j_{1}}}{ \sqrt{u+j_{2}}}\bigg),
\\
\hat{C}(u)= \hat{L}^g_{12}(u) = \frac{2\sqrt{j_1}\sqrt{j_2}\hat{T}_{21}}{u}
- {\rm i} \big(\hat{T}_{11}-\hat{T}_{22}\big) +{\rm i}\bigg( \frac{{\rm i} k_1+ k_2}{\sqrt{j_1-j_2}}+ k_3 \bigg) ,
\\
\hat{B}(u)=\hat{L}^g_{21}(u) =\frac{2\sqrt{j_1}\sqrt{j_2}\hat{T}_{12}}{u}- {\rm i} \big(\hat{T}_{11}-\hat{T}_{22}\big)+ {\rm i}\bigg( \frac{ {\rm i} k_1+ k_2}{\sqrt{j_1-j_2}}- k_3 \bigg).
\end{gather}
\end{subequations}

 The components of ``$T$-spins'' $\hat{T}_{ij}$, $i,j= 1,2$, satisfy commutation relations of~$\mathfrak{gl}(2)$
\[
\big[\hat{T}_{ij}, \hat{T}_{kl}\big]= \d_{kj} \hat{T}_{il}- \d_{il} \hat{T}_{kj}.
\]
They coincide with the components of the gauge-transformed ``$S$-spin'' $\hat{S}$, where the transformation matrix is given by~\eqref{gt} with $u = 0$. In more details, we have
\begin{subequations}\label{trnsfS-T}
\begin{gather}
\hat{T}_{11}= \frac{1}{2}\big(\hat{S}_{11}+\hat{S}_{22}\big)+\frac{{\rm i}}{2}\frac{\sqrt{j_1}- \sqrt{j_2}}{\sqrt{j_1-j_2}}\hat{S}_{12}- \frac{{\rm i}}{2} \frac{\sqrt{j_1-j_2}}{\sqrt{j_1}- \sqrt{j_2}} \hat{S}_{21},
\\
\hat{T}_{22}= \frac{1}{2}\big(\hat{S}_{11}+\hat{S}_{22}\big)-\frac{{\rm i}}{2}\frac{\sqrt{j_1}- \sqrt{j_2}}{\sqrt{j_1-j_2}}\hat{S}_{12} + \frac{{\rm i}}{2} \frac{\sqrt{j_1-j_2}}{\sqrt{j_1}- \sqrt{j_2}} \hat{S}_{21},
\\
\hat{T}_{21}= \frac{{\rm i}}{2}\big(\hat{S}_{11}- \hat{S}_{22}\big)-\frac{1}{2}\frac{\sqrt{j_1}- \sqrt{j_2}}{\sqrt{j_1-j_2}}\hat{S}_{12}- \frac{1}{2} \frac{\sqrt{j_1-j_2}}{\sqrt{j_1}- \sqrt{j_2}} \hat{S}_{21},
\\
\hat{T}_{12}= -\frac{{\rm i}}{2}\big(\hat{S}_{11}- \hat{S}_{22}\big)-\frac{1}{2}\frac{\sqrt{j_1}- \sqrt{j_2}}{\sqrt{j_1-j_2}}\hat{S}_{12}- \frac{1}{2} \frac{\sqrt{j_1-j_2}}{\sqrt{j_1}- \sqrt{j_2}} \hat{S}_{21}.
\end{gather}
\end{subequations}

The corresponding spin Hamiltonian in the external magnetic field
\[
\hat{H}=\mathrm{res}_{u=0} \hat{\tau}^{(2)}(u)
\]
is -- up to Casimir operator $\hat{C}^{(2)}=\hat{T}_{11}^2 + \hat{T}_{22}^2+\hat{T}_{12} \hat{T}_{21}+ \hat{T}_{21} \hat{T}_{12}$ -- explicitly written as follows:
\begin{gather}
\hat{H}=
 -{\rm i} \sqrt{j_1}\sqrt{j_2} \big( \big(\hat{T}_{11}-\hat{T}_{22}\big)\big( \hat{T}_{12}+ \hat{T}_{21}\big)+ \big( \hat{T}_{12}+ \hat{T}_{21}\big)\big(\hat{T}_{11}-\hat{T}_{22}\big)\big)\nonumber\\
\hphantom{\hat{H}=}{}+ 2(j_1+j_2)\big(\hat{T}_{11}^2+ \hat{T}_{22}^2\big)- \frac{2}{\sqrt{j_1-j_2}}({\rm i}k_1 j_2+ k_2 j_1)\big(\hat{T}_{11}-\hat{T}_{22}\big)\nonumber\\
\hphantom{\hat{H}=}{}+ 2 {\rm i} \sqrt{j_1}\sqrt{j_2}\bigg( \frac{{\rm i} k_1+ k_2}{\sqrt{j_1-j_2}}+ k_3 \bigg)\hat{T}_{12}+ 2 {\rm i} \sqrt{j_1}\sqrt{j_2}\bigg( \frac{ {\rm i} k_1+ k_2}{\sqrt{j_1-j_2}}- k_3 \bigg) \hat{T}_{21}. \label{hamgt}
\end{gather}

Applying the transform \eqref{trnsfS-T} to the Hamiltonian~\eqref{hamgt} we recover -- modulo the square of the linear
Casimir operators $\hat{I}= \hat{S}_{11}+\hat{S}_{22}$
 -- the spin Hamiltonian~\eqref{hamGinS}.

Observe that in the case $k_1=k_2=0$ the Hamiltonian~\eqref{hamgt} acquires the following simple form:
\begin{gather}
\hat{H}= 2 {\rm i} \sqrt{j_1}\sqrt{j_2}(k_3+1) \big(\hat{T}_{12}- \hat{T}_{21}\big) -2 {\rm i} \sqrt{j_1}\sqrt{j_2} \big(\hat{T}_{11}-\hat{T}_{22}\big)\big( \hat{T}_{12}+ \hat{T}_{21}\big)\nonumber\\
\hphantom{\hat{H}=}{}
+2(j_1+j_2)\big(\hat{T}_{11}^2+ \hat{T}_{22}^2\big). \label{hamgt'}
\end{gather}

\subsubsection{The representation space}
In order to apply the results of the previous subsections to the considered spin model in an external field we need to describe the space of representation of the corresponding Lax algebra~\eqref{ABCDinT}.
For this purpose we consider an irreducible finite-dimensional irreducible representation of the algebra~$\mathfrak{gl}(2)$ of ``$T$-spins'' in some space $V$. Each such the representation $V$ contains the lowest and highest weight vectors $\mathrm{v}_+$ and $\mathrm{v}_-$ such that
\begin{alignat*}{4}
& \hat{T}_{11}
\mathrm{v}_+=\l_1 \mathrm{v}_+, \qquad&& \hat{T}_{22}\mathrm{v}_+ = \l_2 \mathrm{v}_+, \qquad&& \hat{T}_{21} \mathrm{v}_{+}=0,& 
\\
& \hat{T}_{11}
\mathrm{v}^-=\l_2 \mathrm{v}^-, \qquad&& \hat{T}_{22}\mathrm{v}^- = \l_1 \mathrm{v}^-, \qquad&& \hat{T}_{12} \mathrm{v}^{-}=0.& 
\end{alignat*}
The space $V$ is spanned by the vectors
\[
\mathrm{v}_{+}^{(m)}=(\hat{T}_{12})^m \mathrm{v}_+, \qquad
m=0,\dots , (\l_2 - \l_1),
\]
 or by the vectors
 \[
 \mathrm{v}_{-}^{(m)}=(\hat{T}_{21})^m \mathrm{v}_-, \qquad
m=0,\dots , (\l_2 - \l_1).
\]
The dimension of the space $V$ is hence equal to $n_{\l}=(\l_2 - \l_1)+1$.

\begin{Remark} We will hereafter put $\l_2=N$, $\l_1=0$. This will correspond to the considered below fermion representation, where the vector $\mathrm{v}_+$ is annulled by the fermion operators $f_{j,+}$, $f^{\dag}_{j,-}$, $j = 1,\dots ,N$, and the vector $\mathrm{v}_-$ is annulled by the fermion operators $f_{j,-}$, $f^{\dag}_{j,+}$, $j = 1,\dots , N$.
\end{Remark}

\subsection{The fermionic LMG model in the new basis}
\subsubsection{Second fermionization and the canonical transformation}
Similar to the previous subsection, we will need to construct the
fermionization of the Lie algebra~$\mathfrak{gl}(2)$ in the new basis, i.e., to fermionize the operators $\hat{T}_{ij}$.
The fermionization will be written with the help of another group of fermion creation-anihilation operators $f_{j,\s'}$, $f^{\dag}_{i,\s}$, $i,j = 1,\dots ,N$, $\s,\s'\in
\{+,-\}$ satisfying the same Clifford algebra of anti-commutation relations:
\[
f^{\dag}_{i,\s}f_{j,\s'} +
f_{j,\s'}f^{\dag}_{i,\s}=\d_{\s\s'}\d_{ij},\qquad
f^{\dag}_{i,\s}f^{\dag}_{j,\s'} +
f^{\dag}_{j,\s'}f^{\dag}_{i,\s}=0,\qquad f_{i,\s}f_{j,\s'} +
f_{j,\s'}f_{i,\s}=0.
\]
The same type of
formulae provide the realization of the Lie algebra~$\mathfrak{gl}(2)$:
\begin{alignat}{3}
&\hat{T}_{12}= \sum_{j=1}^N f^{\dag}_{j,+}f_{j,-}, \qquad&&
\hat{T}_{21}= \sum_{j=1}^N f^{\dag}_{j,-}f_{j,+},& \nonumber\\
&\hat{T}_{11}=\sum_{j=1}^N f^{\dag}_{j,+}f_{j,+}, \qquad&&
\hat{T}_{22}= \sum_{j=1}^N f^{\dag}_{j,-}f_{j,-}.& \label{fermioniz''}
\end{alignat}

The following proposition holds true:
\begin{Proposition}
Let the operators $\hat{T}_{ij}$ and $\hat{S}_{kl}$, $m = 1,\dots , N$, be connected by the transformation~\eqref{trnsfS-T}.
Then $f$- and $c$-types of fermions are connected by the following canonical transformation:
\begin{subequations}\label{ferm-transf}
\begin{alignat}{3}
&f_{m,+}=- \frac{{\rm i}}{2} \frac{\sqrt{j_1-j_2}}{\sqrt{j_1}-\sqrt{j_2}} c_{m,+}+ \frac{1}{2} c_{m,-}, \qquad&&
f^{\dag}_{m,+}= {\rm i}\frac{\sqrt{j_1}-\sqrt{j_2}}{\sqrt{j_1-j_2}} c^{\dag}_{m,+} + c^{\dag}_{m,-},&
\\
&f_{m,-}=- \frac{1}{2} \frac{\sqrt{j_1-j_2}}{\sqrt{j_1}-\sqrt{j_2}} c_{m,+}+ \frac{{\rm i}}{2} c_{m,-}, \qquad&&
 f^{\dag}_{m,-}= - \frac{\sqrt{j_1}-\sqrt{j_2}}{\sqrt{j_1-j_2}} c^{\dag}_{m,+} - {\rm i} c^{\dag}_{m,-}.&
\end{alignat}
\end{subequations}
\end{Proposition}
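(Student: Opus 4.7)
The approach is a direct verification via an ansatz. I would postulate that $f$- and $c$-fermions are related by a site-independent linear transformation acting only on the spin index $\sigma \in \{+,-\}$, of the form $f_{m,\alpha} = \sum_\beta M_{\alpha\beta}\,c_{m,\beta}$ and $f^{\dag}_{m,\alpha} = \sum_\beta N_{\alpha\beta}\,c^{\dag}_{m,\beta}$, with two by two numerical matrices $M$, $N$ to be determined. Substituting this ansatz into the bilinear realizations \eqref{fermioniz''} and \eqref{fermioniz} yields $\hat T_{ij} = \sum_m f^{\dag}_{m,i} f_{m,j} = \sum_{\alpha,\beta} N_{i\alpha} M_{j\beta}\,\hat S_{\alpha\beta}$, so any such linear change of fermion operators induces a linear transformation on the matrix-unit components of the $\mathfrak{gl}(2)$ generators, which is then to be matched against \eqref{trnsfS-T}.

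The first step is to read off $M$ and $N$. Comparing the four scalar equations $\hat T_{ij} = \sum_{\alpha,\beta} N_{i\alpha} M_{j\beta}\,\hat S_{\alpha\beta}$ with the explicit formulas of \eqref{trnsfS-T} determines the outer product $N_{i\alpha} M_{j\beta}$ completely, and hence fixes $M$ and $N$ up to a simultaneous rescaling $(M,N) \mapsto (\lambda M, \lambda^{-1} N)$ under which the spin bilinears are invariant. More structurally, one can recognize \eqref{trnsfS-T} as the residue at $u=0$ of the gauge relation $\hat L^g(u) = g^{-1}(u)\hat L(u) g(u)$ underlying \eqref{ABCDinT}, so that $M$ and $N$ are (up to the common sign) proportional to $g(0)^{-1}$ and $g(0)^{T}$, with $g(u)$ given by \eqref{gt}. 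Inserting the explicit matrix $g(0)$ reproduces the entries of \eqref{ferm-transf}.

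To prove that the transformation is canonical I would compute the anticommutators directly. By bilinearity and the $c$-fermion relations, $\{f_{m,\alpha}, f^{\dag}_{n,\beta}\} = \delta_{mn}\sum_\gamma M_{\alpha\gamma} N_{\beta\gamma} = \delta_{mn}\,(MN^{T})_{\alpha\beta}$, so the canonical relations are equivalent to $MN^{T} = I$, an elementary two by two matrix identity verified directly from the entries of \eqref{ferm-transf}; this condition also fixes the scalar $\lambda$ from the previous paragraph. The remaining brackets $\{f,f\}$ and $\{f^{\dag},f^{\dag}\}$ vanish automatically, since they are linear combinations of $\{c,c\}=0$ and $\{c^{\dag},c^{\dag}\}=0$. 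The only real obstacle is careful bookkeeping: since $M$ and $N$ are distinct and not related by complex conjugation, the induced map on spin bilinears is a $GL(2)$ gauge change rather than a $U(2)$ rotation, and the various signs and factors of ${\rm i}$ generated by the $\sqrt{j_1}-\sqrt{j_2}$ versus $\sqrt{j_1-j_2}$ denominators in \eqref{trnsfS-T} must be tracked carefully when reading off the entries of \eqref{ferm-transf}.
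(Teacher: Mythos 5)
Your proposal is correct and takes essentially the same route as the paper, whose entire proof is the one line ``straightforward verification'': you spell that verification out via the matrices $M$, $N$, check that the induced map on the bilinears reproduces \eqref{trnsfS-T}, and confirm canonicity through $MN^{T}=I$, all of which I have checked against the explicit entries. One immaterial slip: the condition $MN^{T}=I$ does \emph{not} fix the rescaling $(M,N)\mapsto(\lambda M,\lambda^{-1}N)$, since $MN^{T}$ is invariant under it, so \eqref{ferm-transf} is one valid representative rather than the unique one --- but the Proposition only requires verifying the given transformation, which your argument does.
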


\begin{proof}The proof is achieved by straightforward verification.
\end{proof}
\subsubsection{ LMG Hamiltonian in the new basis}
Using the fermionization formulae~\eqref{fermioniz''} and the explicit form of the Hamiltonian~\eqref{hamgt'} we obtain the following simple form of the LMG Hamiltonian~\eqref{hamgt'} in the new fermion basis:
\begin{gather}
\hat{H}=  2 {\rm i} \sqrt{j_1}\sqrt{j_2}(k_3+1) \sum_{j=1}^N \big(f^{\dag}_{j,+}f_{j,-}- f^{\dag}_{j,-}f_{j,+}\big)-2 {\rm i} \sqrt{j_1}\sqrt{j_2} \sum_{j=1}^N \big(f^{\dag}_{j,+}f_{j,+}- f^{\dag}_{j,-}f_{j,-}\big) \nonumber\\
\hphantom{\hat{H}=}{}\times \sum_{j=1}^N \big(f^{\dag}_{j,+}f_{j,-} + f^{\dag}_{j,-}f_{j,+}\big)+2(j_1+j_2) \Bigg(\Bigg(\sum_{j=1}^N f^{\dag}_{j,+}f_{j,+}\Bigg)^2+ \Bigg(\sum_{j=1}^N f^{\dag}_{j,-}f_{j,-}\Bigg)^2\Bigg),\!\! \label{hamgtf}
\end{gather}

\begin{Remark}
 Observe that the Hamiltonian~\eqref{hamgtf} coincide -- up to the transform~\eqref{ferm-transf} and identity operators -- with the Hamiltonian~\eqref{hamLMG}.
That is why, diagonalizing the Hamiltonian~\eqref{hamgtf} we will automatically diagonalize also the Hamiltonian~\eqref{hamLMG} and vice versa. \end{Remark}

\subsection{The spectrum and Bethe equations: the general case}
The main ingredient of the algebraic Bethe ansatz are the Bethe vectors.
For the definition of the Bethe vectors we will use the following products \cite{SkrNPB2022}:
\begin{equation*}
\hat{\mathcal{B}}(v_1, v_2, \dots , v_N)=\hat{B}_{1}(v_1) \hat{B}_{2}(v_2) \cdots \hat{B}_{N}(v_N), 
\end{equation*}
where
\begin{equation*}
\hat{B}_{k}(v)= \hat{B}(v)-{\rm i}(2k-1) \operatorname{Id}, \qquad k\in \mathbb{Z}, \qquad {\rm i}=\sqrt{-1}, 
\end{equation*}
and $v_i$, $i = 1,\dots , N$, are ``rapidities'' -- complex numbers to be determined from the Bethe-type equations.
Due to the symmetry of the structure of the Lax algebra we will also consider the products
\begin{equation*}
\hat{\mathcal{C}}(v_1, v_2, \dots , v_N)=\hat{C}_{1}(v_1) \hat{C}_{2}(v_2) \cdots \hat{C}_{N}(v_N), 
\end{equation*}
where
\begin{equation*}
\hat{C}_{k}(v)= \hat{C}(v) + {\rm i} (2k-1) \operatorname{Id}, \qquad k\in \mathbb{Z}, \qquad {\rm i}=\sqrt{-1}, 
\end{equation*}
and the values of rapidities $v_i$, $i = 1,\dots , N$, have to be determined from the other Bethe-type equations.

The following theorem is a consequence of the more general theorem of \cite{SkrNPB2022}:
\begin{Theorem}\quad
\begin{enumerate}\itemsep=0pt
\item[$(i)$] Let
$k_{\pm}=\frac{{\rm i} k_1+ k_2}{\sqrt{j_1-j_2}} \pm k_3$ and $k_-\not = N+1, N+3, \dots , 3N+1$. Let the rapidities $v_k$, $k = 1,\dots ,N$, satisfy Bethe-type equations
\begin{align}
 &{}-\frac{N (v_k+j_1)(v_k+j_2)}{v_k} -\frac{{\rm i}k_1 (v_k+j_{2})+ k_2 (v_k+j_{1})}{\sqrt{j_1-j_2}}\nonumber\\
 &\qquad{}+(2 v_k+j_1+j_2)+ 2\sum_{n=1,\, n\neq k}^N \frac{(v_k+j_1)(v_k+j_2)}{v_k-v_n}\nonumber\\
 &\qquad\quad{}= \frac{1}{4}(k_+ - N -1)(k_- - N -1) \frac{v_k^{N}}{\prod\limits_{n=1,\, n\neq k}^{N}(v_k-v_n)}, \qquad k = {1,\dots ,N}. \label{betheeq1'}
\end{align}
Then the following vectors
in the space $V$
\[
V_+(v_1, v_2, \dots , v_N)= \hat{\mathcal{B}}(v_1, v_2, \dots , v_N) \mathrm{v}_{+}
\]
 are the eigen-vectors of quantum Hamiltonian $\hat{H}$ with eigenvalues
\begin{equation*}
h_+(v_1, v_2, \dots , v_N)=2 N \left((j_1+j_2) N +\frac{{\rm i}k_1 j_{2}+ k_2 j_{1}}{\sqrt{j_1-j_2}}+2 j_1 j_2 \sum_{n=1}^N \frac{1}{v_n} \right). 
\end{equation*}

\item[$(ii)$] Let
 $k_{\pm}=\frac{{\rm i} k_1+ k_2}{\sqrt{j_1-j_2}}\pm k_3$ and $k_+\not= -(N+1), -(N+3), \dots , -(3N+1)$. Let the rapidities~$v_k$, $k = 1,\dots , N$, satisfy Bethe-type equations
\begin{align}
 &{}-\frac{N (v_k+j_1)(v_k+j_2)}{v_k} + \frac{{\rm i}k_1 (v_k+j_{2})+ k_2 (v_k+j_{1})}{\sqrt{j_1-j_2}} \nonumber\\
 &\qquad{}+ (2 v_k+j_1+j_2) +2\sum_{n=1,\, n\neq k}^N \frac{(v_k+j_1)(v_k+j_2)}{v_k-v_n}\nonumber\\
 &\qquad\quad{}= \frac{1}{4}(k_+ + N+1)(k_- + N+1) \frac{v_k^{N}}{\prod\limits_{n=1,\, n\neq k}^{N}(v_k-v_n)}, \qquad k=1,\dots , N. \label{betheeq2'}
\end{align}
Then the following vectors
in the space $V$
\[
V_-(v_1, v_2, \dots , v_N)= \hat{\mathcal{C}}(v_1, v_2, \dots , v_N) \mathrm{v}_{-}
\]
 are the eigen-vectors of quantum Hamiltonian $\hat{H}$ with eigenvalues
 \begin{equation}
h_-(v_1, v_2, \dots , v_N)=2 N \left(\big(j_1+j_2\big) N - \frac{{\rm i}k_1 j_{2}+ k_2 j_{1}}{\sqrt{j_1-j_2}}+2 j_1 j_2 \sum_{n=1}^N \frac{1}{v_n} \right). \label{spehm}
\end{equation}
\end{enumerate}
\end{Theorem}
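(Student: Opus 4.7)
Proof plan. The plan is to show that the vectors $V_\pm(v_1,\ldots,v_N)$ diagonalize the full generating function $\hat{\tau}^{(2)}(u)$ for every $u$, and then extract the eigenvalue of $\hat{H}$ by taking $\mathrm{res}_{u=0}$. Since in the ABA basis $\hat{\tau}^{(2)}(u) = \frac{1}{2}\bigl(\hat{A}(u)^2 + \hat{D}(u)^2 + \hat{B}(u)\hat{C}(u) + \hat{C}(u)\hat{B}(u)\bigr)$, the task reduces to commuting $\hat{A}(u)$, $\hat{D}(u)$ and $\hat{C}(u)$ through the product $\hat{\mathcal{B}}(v_1,\ldots,v_N) = \prod_k \hat{B}_k(v_k)$ and acting on the reference vector $\mathrm{v}_+$.

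The first step is to assemble the commutation relations $[\hat{A}(u),\hat{B}(v)]$, $[\hat{D}(u),\hat{B}(v)]$, $[\hat{C}(u),\hat{B}(v)]$ and $[\hat{B}(u),\hat{B}(v)]$ by transforming the Lax algebra \eqref{laxalj3inf} under the gauge \eqref{gt}. The nonstandard feature, visible from \eqref{ABCDinT}, is that $\hat{B}(u)$ splits into a pole term $2\sqrt{j_1 j_2}\,\hat{T}_{12}/u$ plus a $u$-independent affine piece $-{\rm i}(\hat{T}_{11}-\hat{T}_{22}) + \text{const}$, so $[\hat{B}(u),\hat{B}(v)]$ does not vanish and $\hat{C}(u)$ does not annihilate $\mathrm{v}_+$ but acts on it by a scalar (essentially ${\rm i}(N+k_+)$); meanwhile $\hat{A}(u)\mathrm{v}_+$ and $\hat{D}(u)\mathrm{v}_+$ are scalar multiples of $\mathrm{v}_+$ read off directly from \eqref{ABCDinT}. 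The $c$-number shifts $-{\rm i}(2k-1)$ built into $\hat{B}_k(v)$ are precisely designed so that the extra scalar contributions produced when $\hat{A}(u)$, $\hat{D}(u)$, $\hat{C}(u)$ are pushed successively past $\hat{B}_1(v_1), \hat{B}_2(v_2), \ldots$ telescope into a clean form rather than proliferating.

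With these tools compute $\hat{\tau}^{(2)}(u)\cdot\hat{\mathcal{B}}(v_1,\ldots,v_N)\mathrm{v}_+$ by moving each operator all the way to the right. The $\hat{A}^2 + \hat{D}^2$ part yields the standard split into a ``wanted'' piece, proportional to the original Bethe vector with coefficient $\Lambda_A(u)^2 + \Lambda_D(u)^2$ for products $\Lambda_{A,D}(u)$ over $k$ of the diagonal weights from the commutators, and ``unwanted'' pieces proportional to $\hat{\mathcal{B}}(v_1,\ldots,u,\ldots,v_N)\mathrm{v}_+$ in which one $v_k$ has been swapped for $u$. The cross term $\hat{B}(u)\hat{C}(u) + \hat{C}(u)\hat{B}(u)$ produces analogous contributions, and the scalar action of $\hat{C}$ on $\mathrm{v}_+$ feeds into both the eigenvalue and the unwanted coefficients; this is the origin of the factorized right-hand side $\frac{1}{4}(k_+ - (N+1))(k_- - (N+1))\,v_k^N/\prod_{n\ne k}(v_k-v_n)$ in \eqref{betheeq1'}. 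Setting the coefficient of each unwanted term to zero gives exactly the Bethe equations \eqref{betheeq1'}, and the wanted piece gives an eigenvalue of $\hat{\tau}^{(2)}(u)$ whose residue at $u=0$ simplifies to $h_+(v_1,\ldots,v_N)$. Part (ii) follows by the symmetric argument with $\hat{B}\leftrightarrow\hat{C}$ and $\mathrm{v}_+\leftrightarrow\mathrm{v}_-$.

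The principal obstacle is bookkeeping: the non-vanishing $[\hat{B}(u),\hat{B}(v)]$ commutator, the affine pieces of $\hat{B}$ and $\hat{C}$, and the non-trivial scalar action of $\hat{C}$ on $\mathrm{v}_+$ all generate extra contributions absent from standard ABA. Verifying that these extras assemble precisely into the telescoping shifts $-{\rm i}(2k-1)$ defining $\hat{B}_k$, and that the modification of the Bethe equations factorizes cleanly as $(k_+\mp(N+1))(k_-\mp(N+1))/4$, is where the bulk of the computation lies and constitutes the heart of the ``modified'' nature of the ansatz.
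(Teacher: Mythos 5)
Your outline reproduces the route the paper itself relies on: the paper does not prove the theorem in situ but quotes it as the one-spin specialization of the general $N$-spin modified-ABA theorem of \cite{SkrNPB2022}, and the computation you describe --- acting with $\hat{\tau}^{(2)}(u)=\tfrac12\bigl(\hat A^2(u)+\hat D^2(u)+\hat B(u)\hat C(u)+\hat C(u)\hat B(u)\bigr)$ on $\hat{\mathcal B}(v_1,\dots,v_N)\mathrm v_+$, using that $\hat A(u)$, $\hat D(u)$ act on $\mathrm v_+$ by scalars read off from \eqref{ABCDinT} while $\hat C(u)\mathrm v_+={\rm i}(N+k_+)\mathrm v_+$, and then cancelling unwanted terms --- is exactly the mechanism of that reference. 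Your identification of the structural novelties (non-vanishing $[\hat B(u),\hat B(v)]$, the $u$-independent affine pieces of $\hat B$ and $\hat C$, the non-trivial action of $\hat C$ on the vacuum, and the role of the shifts $-{\rm i}(2k-1)$ in $\hat B_k$) is accurate and is what makes the ansatz ``modified''.

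As a proof, however, the proposal is only a scaffold: every step that actually decides whether the theorem is true is deferred. You do not write down the gauge-transformed commutation relations, do not verify that the scalar shifts telescope, do not show that the unwanted terms close on a manageable family of vectors (in the modified setting they are not simply of the ``replace $v_k$ by $u$'' form, precisely because $[\hat B(u),\hat B(v)]\neq 0$ and $\hat C(u)\mathrm v_+\neq 0$), and do not derive the factorization $\tfrac14\bigl(k_+\mp(N+1)\bigr)\bigl(k_-\mp(N+1)\bigr)$ on the right-hand side of \eqref{betheeq1'} and \eqref{betheeq2'}. You also never account for the hypotheses $k_-\neq N+1,N+3,\dots,3N+1$ and $k_+\neq -(N+1),-(N+3),\dots,-(3N+1)$; the paper's own remark flags these as artifacts of the method, arising where coefficients in the recursive cancellation vanish, and a self-contained argument must locate them. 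So: right method, consistent with the paper's (delegated) proof, but with the entire quantitative content still to be supplied.
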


\begin{Remark}
 Observe that the exclusion of certain values of $k_{\pm}$ in the theorem is due to our way of proving it \cite{SkrNPB2022}. The statement of the theorem seem to be true for any values of $k_{\pm}$, but we have the proof in the cases $k_-\not= N+1, N+3, \dots , 3N+1$ and $k_+\not=  -(N+1), -(N+3),\allowbreak \dots , -(3N+1)$ respectively.
\end{Remark}

\begin{Remark}
 Note that consideration of small $N$ examples shows that for the generic values of~$k_i$, $i = 1,2,3$, both systems of the eigenvectors $\{V_+(v_1, v_2, \dots , v_N)\}$ and $\{V_-(v_1, v_2, \dots , v_N)\}$ are the same. The sets of the corresponding eigenvalues $\{h_+(v_1, v_2, \dots , v_N)\}$ and $\{h_-(v_1, v_2, \dots ,\allowbreak  v_N)\}$ also coincide.
\end{Remark}

\subsection[Example: N=1 case]{Example: $\boldsymbol{N=1}$ case}
Let us consider the simplest case $N=1$ corresponding to two-dimensional representation of~$\mathfrak{sl}(2)$ and~$\mathfrak{gl}(2)$. In this case there is only one rapidity $v_1$ and the Bethe equation~\eqref{betheeq1'} acquires the form
\begin{gather}
 -\frac{ (v_1+j_1)(v_1+j_2)}{v_1} -\frac{{\rm i}k_1 (v_1+j_{2})+ k_2 (v_1+j_{1})}{\sqrt{j_1-j_2}}\nonumber\\
 \qquad{}+(2 v_1+j_1+j_2)- \frac{1}{4}(k_+ - 2)(k_- -2){v_1}=0, \label{betheeq1-1}
\end{gather}
where, as previously, $k_{\pm}=\frac{{\rm i} k_1+ k_2}{\sqrt{j_1-j_2}} \pm k_3$.

The eigenvalues of the Hamiltonian $\hat{H}$ given by~\eqref{hamgt} are written as follows:
\begin{equation}
h_+(v_1)=2 \left((j_1+j_2) +\frac{{\rm i}k_1 j_{2}+ k_2 j_{1}}{\sqrt{j_1-j_2}}+ \frac{2 j_1 j_2}{v_1} \right). \label{spehp-1}
\end{equation}

Introducing the variable $x=v_1^{-1}$ we rewrite the equation~\eqref{betheeq1-1} as follows:
\begin{equation}
 -j_1j_2x^2-\frac{({\rm i}k_1j_2+k_2j_1)x}{\sqrt{j_1-j_2}}+\frac{1}{4}k_3^2+ \frac{1}{4}\frac{(-k_1+ik_2)^2}{(j_1-j_2)}=0. \label{betheeq1-1'}
 \end{equation}
The equation~\eqref{betheeq1-1'} has two solutions
\begin{gather*}
x_{1} = -\frac{1}{2 j_1j_2}\left(\frac{{\rm i}k_1j_2+k_2j_1}{\sqrt{j_1-j_2}}-\sqrt{j_2k_1^2+j_1k_2^2+j_1j_2k_3^2}\right),
\\
x_2 = -\frac{1}{2 j_1j_2}\left(\frac{{\rm i}k_1j_2+k_2j_1}{\sqrt{j_1-j_2}}+\sqrt{j_2k_1^2+j_1k_2^2+j_1j_2k_3^2}\right).
\end{gather*}
Substituting them into the formula~\eqref{spehp-1} we obtain the following eigenvalues of $\hat{H}$:
\[
h_1 = 2 j_1+2 j_2+2 \sqrt{j_2 k_1^2+j_1 k_2^2+j_1 j_2 k_3^2}, \qquad h_2 = 2 j_1+2 j_2-2 \sqrt{j_2 k_1^2+j_1 k_2^2+j_1 j_2 k_3^2}.
\]

We remark that the formula~\eqref{spehm} and Bethe equations~\eqref{betheeq2'} produce in the generic case the same spectrum, i.e., the vectors $V_+(v_1)$ and $V_-(v_1)$ for the generic values of $k_i$, $i= 1,2,3$, are proportional.

We also remark, that in the considered two by two matrix representation the eigenvalues of~\eqref{hamgt} are easily calculated by the direct method
and are shown to coincide with the $h_1$, $h_2$ written above.

\subsection[Example: N=2 case]{Example: $\boldsymbol{N=2}$ case}
Let us consider the case $N=2$ corresponding to three-dimensional representation of~$\mathfrak{sl}(2)$ and~$\mathfrak{gl}(2)$. In this case there are two rapidities $v_1$ and $v_2$ and the Bethe equations~\eqref{betheeq1'} acquire the form
\begin{subequations} \label{betheeq1-2}
\begin{gather}
 -\frac{2 (v_1+j_1)(v_1+j_2)}{v_1} -\frac{{\rm i}k_1 (v_1+j_{2})+ k_2 (v_1+j_{1})}{\sqrt{j_1-j_2}}+(2 v_1+j_1+j_2) + 2 \frac{(v_1+j_1)(v_1+j_2)}{v_1-v_2}\nonumber\\
\qquad{}= \frac{1}{4}(k_+ - 3)(k_- - 3) \frac{v_1^{2}}{(v_1-v_2)}, \label{betheeq1-2-1}
\\
 -\frac{2 (v_2+j_1)(v_2+j_2)}{v_2} -\frac{{\rm i}k_1 (v_2+j_{2})+ k_2 (v_2+j_{1})}{\sqrt{j_1-j_2}}+(2 v_2+j_1+j_2) + 2 \frac{(v_2+j_1)(v_2+j_2)}{v_2-v_1}\nonumber\\
\qquad{}= \frac{1}{4}(k_+ - 3)(k_- - 3) \frac{v_2^{2}}{(v_2-v_1)}, \label{betheeq1-2-2}
\end{gather}
\end{subequations}
where, as previously, $k_{\pm}=\frac{{\rm i} k_1+ k_2}{\sqrt{j_1-j_2}} \pm k_3$.

The eigenvalues of the Hamiltonian $\hat{H}$ given by~\eqref{hamgt} are written as follows:
\begin{equation*}
h_+=4 \left(2(j_1+j_2) +\frac{{\rm i}k_1 j_{2}+ k_2 j_{1}}{\sqrt{j_1-j_2}}+ 2 j_1 j_2\bigg(\frac{1}{v_1} + \frac{1}{v_2}\bigg) \right). 
\end{equation*}
In order to solve the equations~\eqref{betheeq1-2} and to find the spectrum we at first make a change of variables
 $X=\big(v^{-1}_1+v_2^{-1}\big)$, $Y= \big(v^{-1}_1-v_2^{-1}\big)$. Then we consider the sum and the difference of the equations~\eqref{betheeq1-2-1},~\eqref{betheeq1-2-2} and obtain the following equations for $X$, $Y$:
\begin{subequations} 
\begin{gather}
2 j_1 j_2 (j_1-j_2) X^3+
4 (j_1-j_2) (j_1+j_2) X^2+\bigl(6 j_1 j_2 (j_1-j_2) Y^2+(j_1-j_2) k_3^2+ (j_2- j_1)\nonumber\\
\qquad{}+(k_1- {\rm i}k_2)^2+6 ({\rm i} k_1+k_2) \sqrt{j_1-j_2}\bigr) X+4 ({\rm i} k_1 j_2+k_2 j_1) \sqrt{j_1-j_2} Y^2=0,\label{betheeq1-2-1'} \\
Y^2+3 X^2+ \frac{(2 \sqrt{j_1-j_2} ({\rm i}k_1 j_2+k_2 j_1)+j_1^2-j_2^2) }{j_1 j_2 (j_1-j_2)} X-
\frac{k_3^2}{2 j_1 j_2}\nonumber\\
\qquad{}+ \frac{(2 ({\rm i} k_1+k_2) \sqrt{j_1-j_2}+j_1-j_2 -(k_1-{\rm i}k_2)^2)}{2 j_1 j_2 (j_1-j_2)}=0. \label{betheeq1-2-2'}
\end{gather}
\end{subequations}
Resolving the equation~\eqref{betheeq1-2-2'} with respect to $Y^2$, substituting it into~\eqref{betheeq1-2-1'}, taking into account that
\[
 X =\frac{h_+}{8 j_1j_2}- \bigg(\frac{(j_1+j_2)}{ j_1 j_2} +\frac{{\rm i}k_1 j_{2}+ k_2 j_{1}}{2 j_1j_2\sqrt{j_1-j_2}}\bigg),
\]
we finally derive the following equation for the eigenvalues $h_+$:
\begin{gather}
h_+^3-20(j_2+ j_1) h_+^2+\big(128 j_1^2-16 j_1 k_3^2 j_2+272 j_1 j_2+128 j_2^2-16 j_1 k_2^2-16 j_2 k_1^2\big) h_+ \nonumber\\
\qquad{} +128 (j_1+j_2) j_1 j_2 k_3^2
+64 j_2 (2 j_1+j_2) k_1^2+64 j_1 (j_1+2 j_2) k_2^2\nonumber\\
\qquad{}-128 (j_1+2 j_2) (2 j_1+j_2) (j_1+j_2)=0. \label{charpol2}
\end{gather}

We remark, that in the considered three-dimensional representation the characteristic polynomial of the Hamiltonian~\eqref{hamgt} is easily calculated by direct method and coincide with the polynomial~\eqref{charpol2}.

We also remark that the formula~\eqref{spehm} and Bethe equations~\eqref{betheeq2'} produce in the generic case the same spectrum, i.e., the vectors $V_+(v_1,v_2)$ and $V_-(v_1,v_2)$ for generic values of $k_i$, $i= 1,2,3$, are proportional.

\section{Conclusion and discussion} \label{section6}
In the present paper we have shown that the Lipkin--Meshkov--Glick $2 N$-fermion model is a~particular case of one-spin Gaudin-type model in an external magnetic field based on the non-skew-symmetric classical elliptic $r$-matrix in the $j_3\rightarrow \infty$ limit. We have also constructed a~further integrable generalization of the Lipkin--Meshkov--Glick model, which, written in the spin form, coincides with the quantum Zhukovsky--Volterra gyrostat. We have diagonalized the corresponding quantum Hamiltonian by means of the modified algebraic Bethe ansatz and explicitly solved the corresponding Bethe-type equations for small number of fermions $N=1,2$.

It will be very useful to find -- at least numerically -- the solutions of the constructed Bethe-type equations for the case of the arbitrary $N$. This problem is open. It will be also interesting to compare the obtained Bethe ansatz with other variants of the Bethe ansatz, existing in the literature for the (non-generalized) Lipkin--Meshkov--Glick model \cite{Duk, Pan}.

\subsection*{Acknowledgements}
The author is grateful to N. Manojlovic for the Bethe ansatz discussions.

\pdfbookmark[1]{References}{ref}
\LastPageEnding

\end{document}